\newtheorem{theorem}{Theorem}[section]
\newtheorem{definition}{Definition}[section]
\newtheorem{lemma}{Lemma}[section]
\newtheorem{corollary}{Corollary}[section]
\begin{document}
\title{Reconstruction Conjecture for Graphs Isomorphic to Cube of a Tree}

\author{  {S. K. Gupta} \and {Akash Khandelwal} }  
%

\maketitle

\begin{abstract}
This paper proves the reconstruction conjecture for graphs which are isomorphic to the cube of a tree. The proof uses the reconstructibility of trees from their peripheral vertex deleted subgraphs. The main result follows from (i) characterization of the cube of a tree (ii) recognizability of the cube of a tree (iii) uniqueness of tree as a cube root of a graph $G$, except when $G$ is a complete graph (iv) reconstructibility of trees from their peripheral vertex deleted subgraphs.
\end{abstract}


\section{Introduction}
\subsection{ The Reconstruction Conjecture (RC)}
\paragraph*{} The Reconstruction Conjecture (RC) is one of the most celebrated unsolved problems in Discrete Mathematics and Combinatorics circles. It was first conjectured by Ulam and Kelly in 1941 as stated in the survey paper by Bondy \cite{b1}. 

\subsubsection{ Original Definition}
\paragraph*{} Ulam \cite{u1} states the following problem:
\begin{quote}
``Suppose that in two sets $A$, $B$; each of $n$ elements, there is defined a distance function $\rho$ for every pair of distinct points, with values either $1$ or $2$ and $\rho(x,x) =0$. Assume that for every subset of $n-1$ points of $A$; there exists an  isometric system of $n-1$ points of $B$, and that the number of distinct subsets isometric to any given subset of $n-1$ points  is same in $A$ as in $B$. Are $A$ and $B$ isometric?''
\end{quote}

\subsubsection{ Modified Definition of the Graph Reconstruction Conjecture}
Reconstruction Conjecture can be restated as: 
\begin{quote}
``A simple finite graph $G$ with at least three points can be reconstructed uniquely (up to isomorphism) from its collection of vertex deleted subgraphs $G_i$.''  
\end{quote}
This conjecture was termed by Harary  \cite{h3},  a ``graphical disease'', along with the 4-Color Conjecture and the characterization of Hamiltonian graphs. The term ``diseases'' comes from the fact that such problems can be formulated very easily and concisely, and most identified diseases are understandable to undergraduates. They are highly contagious, thereby attracting the attention of both professionals and layman mathematicians.

\paragraph*{} The reconstruction problems provide a fascinating study of the structure of graphs. The identification of structure of a graph is the first step in its reconstruction. We can determine various invariants of a graph from its subgraphs, which in turn tell us about the structure of the graph.

\subsection{Basic Terminologies} 

\paragraph*{} The key terms in the paper are introduced below. For terms not defined here, we shall use the terminology followed in Harary \cite{h3}. 

\begin{definition} [Deck of a Graph and its Cards]
 Any graph $G$ has a vertex set $V(G)$ and an edge set $E(G)$. A card is any vertex-deleted-subgraph of $G$, with $G_i$ representing the unlabelled subgraph of $G$ with the $i^{th}$ vertex and its coincident edges removed. The deck of the graph $G$ is the multiset of all cards of $G$.   
\end{definition}

\begin{definition} [$k$-periphery of a subtree]
 Given a tree $T$ with vertex set $V$ and an arbitrary subtree $T_s$, the distance of $v$ $\in$ $V$ from $T_s$ is defined to be length of smallest path connecting $v$ to some $v'$ $\in$ $T_s$. The k-periphery of $T_s$ is defined to be a set of vertices as a distance $k$ from $T_s$. 
\end{definition}

\begin{definition}[Peripheral Vertices of a Graph] 
The eccentricity $\varepsilon_G(v)$ of a vertex $v$ in a graph $G$ is the maximum distance from $v$ to any other vertex. Vertices with maximum eccentricity are called peripheral vertices. 
\end{definition}

\begin{definition} [Power of a Graph]
 Let $G$ be a graph on $p$ points $v_1$, $v_2$,..., $v_p$. The k-th power of $G$, denoted by $P_k(G)$, is a graph on points $u_1$,$u_2$,..., $u_p$ where $u_i$ and $u_j$ ($i \neq j$) are adjacent if and only if $v_i$ and $v_j$ are at distance at most $k$ in $G$. We also call $G$ to be a k-th root of $P_k(G)$.  
\end{definition}

\paragraph* {}In general, a graph may have more than one $k$th root. The uniqueness of tree as a square root of a graph has been proven independently by Ross and Harary \cite{r2} and a simpler proof using the reconstruction conjecture, has been given by Gupta \cite{s1}\cite{s2}. The uniqueness of a tree a a cube root of a graph has been established by Yerra et al. \cite{a1} In Section 3, we use a different approach to show the uniqueness of tree as a cube root of a graph $G$, except when $G$ is a complete graph, in which case $G$ will not have a unique tree root. Further, Yerra et al. \cite{a1} showed that for any $n \ge 4 $, there exist non-isomorphic trees $T_1$ and $T_2$ such that $T_1^n \cong  T_2^n$. 

\begin{definition} [End Deleted Tree]
  The end deleted tree $\xi$ of a given tree $T$ is defined to be the tree obtained by deletion of all the leaf node of $T$. 
\end{definition}

\begin{definition} [Weighted Tree]
\label{WT}
 A weighted tree is a tree having weights associated with every vertex. The weight on vertex $i$ represents the number of branches emanating from vertex $i$ in tree $T$ [Fig. \ref{tree_cube},\ref{weighted_tree}]. Any tree $T$ is equivalent to a weighted tree having weights associated with every vertex of $\xi$, the end deleted tree of $T$.  
\end{definition}
 
\begin{definition}[$i$th order leaf nodes]
 The set of $i$th order leaf nodes $L_i$ is defined to be that set of vertices which are leaf nodes of i-times end deleted tree of $T$. i.e., a vertex $v \in  L_i$ if $v$ is a leaf node in i-times deleted tree. In the base case, $L_0$ is the set of leaf nodes of $T$. 
\end{definition}
 
\begin{definition} [Distance between two edges] 
 If $e_1\equiv\{u_1,u_2\}$ and $e_2\equiv\{v_1,v_2\}$ are two edges in the graph then the distance $d(e_1,e_2)$ between them is defined to be $n+1$, where $n = min\{d(u_1,v_1),d(u_2,v_1),d(u_1,v_2),d(u_2,v_2)\}$. 
\end{definition}

\begin{definition} [Distance between an edge and a vertex]
 If $\{u_1,u_2\}$ is an edge $e$ and $v$ is a vertex in a Graph $G$, then the distance $d(e,v)$ between $e$ and $v$ is defined as $d(e,v) = min\{d(u_1,v),d(u_2,v)\}$.  
\end{definition}
 
\begin{definition} [$k$-span of a vertex] - Let $k$ be a natural number. If $v$ is a vertex then the $k$-span of vertex $v$, $S(v,k)$ is defined to be the set of all vertices as a distance upto $k$ from $v$, i.e.,
\begin{center} $S(v,k) = \{u|d(u,v) \le k\}$ \end{center}
\end{definition}

\begin{definition} [Span of an edge]
 Let $k$ be a natural number. If $e = \{v_1,v_2\}$ is an edge, then the $k$-span of edge $e$, $S(e,k)$ is defined to be the set union of $k$-spans of its end-points, i.e.,
\begin{center} $S(e,k) = S(v_1,k) \cup S(v_2,k)$ \end{center}
\end{definition}
 
\subsection{ Discussion about the Problem }

The statement of the reconstruction conjecture excludes the trivial graph $K1$, graphs on two vertices and infinite graphs \cite{b1}\cite{n1}. The deck of graphs on two points, i.e. $K2$ and $K2'$, have a pair of $K1$s comprising each of their decks but the graphs are non-isomorphic. For every infinte cardinal $\alpha$, there exists a graph with $\alpha$ edges which is not uniquely reconstructible from its family of edge deleted subgraphs \cite{c1}.
Apart from these two exceptions which prohibit the conjecture from encompassing all graphs, unique reconstructibility is conjectured for all other graphs.

 One of the ways for tackling the RC is known as the reconstructive approach, and is followed in many of the proofs of the conjecture for specific classes.  While reconstructing a class of graphs using this approach, the problem of reconstruction partitions into two subproblems, namely recognition: showing that membership in that class is determined by the deck, and weak reconstruction: showing that no two non-isomorphic members of the class have the same deck.

 The reconstruction conjecture has been proved for trees by Kelly \cite{k1}, and squares of trees by  Gupta \cite{s1}\cite{s2}. Apart from these, the conjecture has been proved for a number of graph classes such as unicyclic graphs \cite{m2}, regular graphs \cite{n2} and disconnected graphs \cite{h1}. Though the problem can be stated very simply,yet due to a lack of a nice set of characterizing invariants, it has still not been proven for very important classes of graphs like bipartite graphs and planar graphs. For further study of this conjecture, the reader is referred to surveys by Bondy \cite{b1} and Harary \cite{h2}.

\section{ Reconstruction Conjecture For Cube of Trees }

\subsection{\label{2.1} Overview of Proof Technique}


Section 2 lists basic properties of cubes of trees. In Section 3, a characterization of cubes of trees is given. It also shows uniqueness of tree as a cube root of a graph $G$, except when $G$ is a complete graph. Section 4 proves  recognizability and weak reconstruction of graphs isomorphic to cubes of trees, utilizing reconstructibility of trees from their peripheral vertex deleted subgraphs.

\subsection{\label{2.2} Properties of Third Power }
Listed below are few properties of third power of a graph:

\begin{lemma}
\label{Lemma 2.1}
  Let $e = \{v_1,v_2\}$ be an edge. If $v_1,v_2$ $\in$  $L_{i+1}$, $i \ge 0$ then the subgraph which is the $1$-span of $e$ is a clique $\kappa$.  Any Graph $G(\cong P_3(T))$ has cliques only of this type. The edge $e$ is defined as a clique edge and clique $\kappa$ is said to be centered about $e$.
\end{lemma}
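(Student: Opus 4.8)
The plan is to establish the two halves of the statement in turn: that the $1$-span of such an edge induces a complete subgraph, and that these are the only cliques of $G\cong P_3(T)$, the hypothesis $v_1,v_2\in L_{i+1}$ serving only to make the clique $\kappa$ \emph{maximal}. The first half I would argue purely by the triangle inequality in $T$: if $u,w\in S(e,1)=S(v_1,1)\cup S(v_2,1)$, then $d_T(u,a)\le 1$ and $d_T(w,b)\le 1$ for some $a,b\in\{v_1,v_2\}$, and since $d_T(a,b)\le 1$ we get $d_T(u,w)\le 3$, so $u$ and $w$ are adjacent in $G=P_3(T)$; hence $G[S(e,1)]$ is a clique. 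Note this uses nothing about $L_{i+1}$, so in fact $S(e,1)$ is a clique for every edge $e$ of $T$.

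For the second half -- the substantive part -- I would take an arbitrary maximal clique $C$ of $G$ and let $T_C$ be the smallest subtree of $T$ containing $C$. Every leaf of $T_C$ lies in $C$, and distances in $T_C$ agree with distances in $T$, so $\mathrm{diam}(T_C)=\max\{d_T(x,y):x,y\in C\}\le 3$. A tree of diameter at most $3$ is centred at a single vertex $c$ (if the diameter is $\le 2$) or at a single edge $\{a,b\}$ (if the diameter is $3$, giving a double-star structure); in either case one checks directly that $C\subseteq S(c,1)$, respectively $C\subseteq S(\{a,b\},1)$, and when $T_C$ is centred at a vertex $c$ one has $S(c,1)\subseteq S(e,1)$ for any edge $e$ of $T$ incident to $c$. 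Thus $C\subseteq S(e,1)$ for some edge $e=\{v_1,v_2\}$ of $T$, and since $S(e,1)$ is itself a clique by the first half, maximality of $C$ forces $C=S(e,1)$. It then remains to identify which edges $e$ actually arise, by the following extendability test (again the triangle inequality): if $v_2$ has a $T$-neighbour $u\neq v_1$, then every vertex $w$ with $d_T(w,v_1)=2$ lying on the $v_1$-side of $e$ satisfies $d_T(w,u)=4$ and hence cannot be added to $S(e,1)$. Applying this to both endpoints shows that $S(e,1)$ is a maximal clique precisely when neither $v_1$ nor $v_2$ is a leaf of $T$, i.e. each of $v_1,v_2$ lies in some layer $L_{j}$ with $j\ge 1$ -- a clique edge; whereas if an endpoint of $e$ is a leaf of $T$, then $S(e,1)$ is maximal only when $T$ is a star, which is exactly the case $\mathrm{diam}(T)\le 2$ in which $G$ is complete and is handled separately in Section~3.

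The step I expect to demand the most care is this last one: correctly matching clique edges to the end-deletion filtration $L_0,L_1,\dots$ and controlling the boundary behaviour near the centre of $T$ -- trees whose iterated end-deletion terminates in a single vertex or a single edge, the precise layer(s) $L_{i+1}$ in which the two endpoints of a clique edge land, and the degenerate star case where $P_3(T)$ is complete and ``$G$ has cliques only of this type'' has to be read in the trivial sense. The distance estimates themselves are routine triangle-inequality bookkeeping; marrying them cleanly to the layer structure so that the equivalence ``clique edge $\Leftrightarrow$ maximal clique'' holds without exception is the delicate point.
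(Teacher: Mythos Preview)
Your argument is correct and self-contained, and in that sense it goes well beyond what the paper does: the paper's own ``proof'' of this lemma is a one-line citation of Lemma~3.1.2.1 in Gupta--Singh~\cite{s3}, with no argument reproduced. So there is no approach to compare against here; you have supplied the content that the paper outsources.

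A couple of remarks on your write-up. First, your triangle-inequality computation for the forward direction and your diameter-$\le 3$ subtree analysis for the converse are both sound; the case split into ``centred at a vertex'' versus ``centred at an edge'' is exactly the right structural fact about trees of small diameter, and the reduction of the vertex-centred case to an incident edge via $S(c,1)\subseteq S(e,1)$ is clean. Second, you are right to flag the layer condition as the delicate point: the hypothesis ``$v_1,v_2\in L_{i+1}$ for some $i\ge 0$'' in the lemma, read literally, asks that both endpoints lie in the \emph{same} pruning layer, which is strictly stronger than what is actually needed (and in general false for edges of the end-deleted tree --- already on $P_5$ the central edge has endpoints in $L_1$ and $L_2$). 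Your reading --- neither endpoint is a leaf, i.e.\ $e$ is an edge of $\xi$ --- is the one that makes the ``only cliques of this type'' assertion true and matches the paper's later use of the lemma (cf.\ Definition~\ref{TOC} and Lemma~\ref{Lemma 2.5}). It would be worth stating explicitly in your final version that this is how the hypothesis must be interpreted.
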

\begin{proof}
 This lemma follows directly from Lemma 3.1.2.1 in \cite{s3}.
\end{proof}

\begin{definition} [Clique Distance]
 The distance $d(S_1,S_2)$ between two cliques $S_1$ and $S_2$ in $G$ is defined to be the distance between clique edges of $S_1$
 and $S_2$. 
\end{definition}

\begin{definition} [Terminal Edge]
 An edge $e$ = $\{u,v\}$ is called a terminal edge if atleast one of $deg(u)$ and $deg(v)$ is 1.
\end{definition}

 \begin{definition} [$k$th order Terminal edges]  
The terminal edges of $k$-times end deleted tree of tree $T$ are called $k$th order terminal edges of $T$ .
 \end{definition}

\begin{definition} [Terminal Clique]
 A clique $S$ is said to be a terminal clique if $S$ is the $k$-span of edge $e$ = $\{v1,v2\}$ where $k = \frac{n - 1}{2}$ and either  $v_1$ $\in$ $L_k$ or $v_2$ $\in$ $L_k$. 
\end{definition}

\begin{lemma}
 \label{Lemma 2.2}
 $S$ is a terminal clique of graph $G$ iff there exists a unique clique $S'$ such that $\forall v$ $\in$ $S$, either $v$ $\in$ $S$ or $v$ $\in$ any clique other than $S$. Further, the clique edge of $S$ is adjacent to the clique edge of $S'$ .
\end{lemma}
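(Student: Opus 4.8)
The plan is to argue entirely on the tree side. By Lemma~\ref{Lemma 2.1} and the definition of clique distance, the cliques of $G$ correspond bijectively to their clique edges --- which are precisely the edges of the end-deleted tree $\xi$ --- and adjacency of cliques means adjacency of the corresponding clique edges in $T$; a clique is terminal exactly when its clique edge has an endpoint in $L_1$, i.e. is incident to a leaf of $\xi$. I will use throughout that, for a vertex $u$ of $G$ and a clique edge $e'=\{x,y\}$, one has $u\in S(e',1)$ iff $u\in\{x,y\}$ or $u$ is a $T$-neighbour of $x$ or of $y$. I read the right-hand side of the statement as: there is a unique clique $S'\neq S$ containing every vertex of $S$ that also lies in some clique other than $S$ (I call these the \emph{shared} vertices of $S$, the remaining ones being \emph{private} to $S$). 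I will prove the two implications separately, the forward one carrying the weight.

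\emph{Forward direction.} Let $S$ be terminal with clique edge $e=\{v_1,v_2\}$, say $v_1\in L_1$. Since $v_1$ is a leaf of $\xi$, its only non-leaf $T$-neighbour is $v_2$, so
\[
S=S(v_1,1)\cup S(v_2,1)=\{v_1,v_2\}\cup(\text{leaf-children of }v_1)\cup(\text{neighbours of }v_2\text{ in }T).
\]
If $v_2\in L_1$ too, then $\xi=e$, $T$ has diameter at most $3$, and $G$ is complete, contrary to hypothesis; so $v_2$ is internal in $\xi$. Each leaf-child of $v_1$ has $v_1$ as its only $T$-neighbour, and $e$ is the only clique edge meeting $v_1$, so it lies in no clique but $S$: it is private. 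Every other vertex of $S$ is equal or adjacent to $v_2$, hence lies in $S(e',1)$ for \emph{every} clique edge $e'$ meeting $v_2$; since $v_2$ is internal there is at least one such $e'\neq e$, so all of these vertices are shared --- in fact the set of shared vertices of $S$ is exactly $S(v_2,1)$. Taking $e'=\{v_2,w\}$ to be the clique edge at $v_2$ that points toward the centre of $T$, and $S'=S(e',1)$, we get $S(v_2,1)\subseteq S'$, so $S'$ contains all shared vertices of $S$; moreover $e$ and $e'$ meet at $v_2$, hence are adjacent, which is the final assertion of the lemma. It remains to see that $S'$ is the \emph{only} clique other than $S$ with this absorbing property.

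\emph{Converse.} Suppose $S$ is not terminal, so its clique edge $e=\{v_1,v_2\}$ has both endpoints internal in $\xi$; then each $v_i$ meets a clique edge $e_i\neq e$, whence $v_i$ itself and every leaf-child of $v_i$ lie in both $S$ and $S(e_i,1)$ and so are shared. A short case analysis --- on whether $v_1$ or $v_2$ has a leaf-child --- then shows that the shared vertices of $S$ cannot all lie in one clique other than $S$: absorbing a leaf-child of $v_i$ forces that clique to meet $v_i$, a clique meeting only one of $v_1,v_2$ omits a neighbour of the other, and if neither $v_i$ has a leaf-child then every vertex of $S$ is shared, so the absorbing clique would have to contain the maximal clique $S$ properly. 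Hence no admissible $S'\neq S$ exists, the right-hand condition fails, and we obtain the contrapositive.

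\emph{Main obstacle.} The substance of the proof is the uniqueness of $S'$ in the forward direction. Exhibiting $S'$ and checking the adjacency of clique edges are routine manipulations with $S(e,1)=S(v_1,1)\cup S(v_2,1)$; but every clique whose clique edge meets $v_2$ contains $S(v_2,1)$, so ruling out a second such clique comes down to showing $v_2$ is incident to exactly one clique edge besides $e$. This can fail for very small or degenerate trees --- for instance when $T$ is a spider all of whose legs have length two --- so the argument must genuinely use that $G$ is not a complete graph, and more precisely that the inner endpoint $v_2$ is not a branch vertex of $\xi$. That is the one step that is not a mechanical distance computation, and the point at which the structural facts about cubes of trees established earlier really enter.
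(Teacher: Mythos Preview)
The paper itself does not prove this lemma: its entire proof is the sentence ``This lemma follows directly from Theorem~3.2.2.1 in \cite{s3}.'' So there is no in-paper argument to compare your attempt against; what you have written is already much more than the paper supplies.

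That said, your forward direction has a genuine gap that you flag but never close. You write ``It remains to see that $S'$ is the \emph{only} clique other than $S$ with this absorbing property,'' then move on to the converse, and in the final paragraph you call uniqueness the ``main obstacle'' without discharging it. Worse, the spider you mention is not just a warning sign---under your (reasonable) reading of the garbled statement it is an actual counterexample. Take $T$ with centre $c$, neighbours $a_1,a_2,a_3$, each $a_i$ carrying a single pendant $b_i$. Then $\xi=K_{1,3}$, the maximal cliques of $P_3(T)$ are $S_i=\{c,a_1,a_2,a_3,b_i\}$, and $S_1$ is terminal since $a_1\in L_1$. The shared vertices of $S_1$ are exactly $\{c,a_1,a_2,a_3\}$, and \emph{both} $S_2$ and $S_3$ contain this set; so no unique absorbing $S'$ exists, yet $P_3(T)$ is not complete (the $b_i$ are pairwise at distance $4$ in $T$). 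Thus the claim you set out to prove---that $G\not\cong K_p$ forces $v_2$ to have a unique second clique edge---is simply false, and no amount of appeal to ``structural facts about cubes of trees established earlier'' will rescue it. Either the statement in \cite{s3} differs from the visibly corrupted version printed here, or an additional hypothesis (e.g.\ that the inner endpoint has degree~$2$ in $\xi$) is being silently assumed; in any case your proposal does not establish the forward implication.

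Your converse, by contrast, is essentially correct: when both endpoints of the clique edge are internal in $\xi$, the case analysis on leaf children together with maximality of $S$ does rule out any absorbing $S'\ne S$.
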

\begin{proof}
 This lemma follows directly from Theorem 3.2.2.1 in \cite{s3}.
\end{proof}

\begin{lemma}
\label{Lemma 2.3} 
If $v$ is a terminal vertex in $G(\cong P_3(T))$ , then $v$ is a leaf node in $T$ .
\end{lemma}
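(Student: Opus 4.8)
The plan is to prove the contrapositive-flavoured statement directly: assume $v$ is a terminal vertex in $G \cong P_3(T)$, meaning $v$ lies on a terminal edge $e = \{u,v\}$ of $G$ with $\deg_G(v) = 1$, and show that $v$ must be a leaf of $T$. The governing idea is that $\deg_G(v) = 1$ forces the $3$-span of $v$ in $T$ to be extremely small, and only a leaf of $T$ can have such a tiny $3$-span when $T$ has enough vertices for $G$ to be connected and non-complete.

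First I would translate the hypothesis into the language of $T$. Since $u_i$ and $u_j$ are adjacent in $G = P_3(T)$ exactly when $d_T(v_i,v_j) \le 3$, the condition $\deg_G(v) = 1$ says that $v$ has exactly one vertex of $T$ within distance $3$, namely $u$. In particular $S(v,3) = \{v,u\}$ in $T$, so every vertex of $T$ other than $u$ is at distance $\ge 4$ from $v$; combined with connectivity this forces $d_T(v,u) = 1$, i.e. $uv \in E(T)$, and moreover $u$ is the unique neighbour of $v$ in $T$ (any second neighbour would also sit in $S(v,3)$). Hence $\deg_T(v) = 1$, which is exactly the claim that $v$ is a leaf node of $T$. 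The one thing to rule out is the degenerate possibility that $T$ is so small that $S(v,3)$ accidentally equals all of $V(T)$ for a non-leaf $v$ — but in that case every pair of vertices of $T$ is within distance $3$, so $G$ is complete, and a complete graph has no terminal vertices (every degree is $p-1 \ge 2$), contradicting $\deg_G(v)=1$. So this boundary case cannot arise under the hypothesis.

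I would organize the write-up as: (1) unpack $\deg_G(v) = 1$ as $|S(v,3)| = 2$ in $T$; (2) use connectivity of $T$ to locate the unique other vertex $u$ of $S(v,3)$ as the unique $T$-neighbour of $v$, noting that a leaf $v$ of $T$ indeed sits on a terminal edge of $G$ only via this neighbour; (3) dispatch the $G$-complete degenerate case. Steps (1) and (2) are essentially bookkeeping with the distance definition, and Lemma~\ref{Lemma 2.1} on the clique structure of $G$ can be invoked if one wants to identify the clique containing $e$ explicitly, though it is not strictly needed here.

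The main obstacle is not conceptual difficulty but making sure the small-tree edge cases are handled cleanly: one must verify that for $n = 3$ (so $P_3(T)$ can already be the whole "diameter-3 ball") the only way a vertex of $G$ has degree $1$ is to be a genuine leaf of $T$, and that the excluded complete-graph situation is precisely the one flagged in the paper's statement about non-unique cube roots. I expect the argument to be short — a few lines — once the translation to distances in $T$ is in place.
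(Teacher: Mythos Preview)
Your argument is internally consistent under your reading of ``terminal vertex'' as ``vertex of degree $1$ in $G$'', but that reading makes the lemma vacuous: if $|T|\ge 3$ then every vertex $v$ of $T$ has at least two vertices within $T$-distance $3$ (a neighbour $u$, together with either a second neighbour of $v$ or a second neighbour of $u$), so $\deg_G(v)\ge 2$ and $G=P_3(T)$ contains no degree-$1$ vertices at all. The hypothesis you unpack is therefore never satisfied except when $T=K_2$. This cannot be what is intended, since the set $V_T$ of terminal vertices is then used substantively in Lemma~\ref{Lemma 2.4} and in the definition of $k$th-order terminal cliques to iteratively peel $G$; an always-empty $V_T$ would make those statements contentless.

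The paper gives no argument of its own here --- it simply cites Lemma~3.2.2.2 of \cite{s3} --- so the operative definition of ``terminal vertex'' lives in that reference and, judging from the surrounding material (Lemma~\ref{Lemma 2.2}, the terminal-clique definition, and Lemma~\ref{Lemma 2.5} identifying the tree of cliques with the end-deleted tree $\xi$), is a clique-theoretic condition intrinsic to $G$ rather than a degree condition. The substance of the lemma is that this $G$-detectable property forces $v$ to be a leaf of $T$. Your distance-translation idea (reading adjacency in $G$ as $d_T\le 3$) is the right engine and would likely drive a correct proof once the genuine hypothesis is in hand, but as written you have proved a different, trivial statement.
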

\begin{proof}
 This lemma follows directly from Lemma 3.2.2.2 in \cite{s3}.
\end{proof}

\begin{lemma}
\label{Lemma 2.4}  If $V_T$ = $\{v_1 , v_2 , v_3 . . . v_k \}$ is the set of terminal vertices of a graph $G(\cong P_3(T))$ and $V_s$ is any subset of $V_T$ then there exists a tree $T_1$ such that $G-V_s$ $\cong P_3(T_1)$.
\end{lemma}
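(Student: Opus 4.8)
The plan is to exhibit $T_1$ explicitly as $T$ with the appropriate leaves removed, and to show that leaf deletion in $T$ commutes with taking the third power. First, by Lemma \ref{Lemma 2.3}, every terminal vertex of $G(\cong P_3(T))$ is a leaf node of $T$; in particular every vertex of $V_s \subseteq V_T$ is a leaf of $T$. Put $T_1 := T - V_s$. Since a leaf of a tree is never a cut vertex, deleting leaves one by one keeps the graph connected and acyclic, so $T_1$ is a tree (here we use that $T$ has at least three vertices, which holds whenever $G$ has at least three vertices; the degenerate cases $T \cong K_1$ and $T \cong K_2$ can be checked by hand and are anyway outside the scope of the reconstruction conjecture).

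The crux is the identity $P_3(T - v) = P_3(T) - v$ for any leaf $v$ of $T$. This follows from the elementary fact that a leaf lies on no shortest path between two other vertices: for $u, w \in V(T) \setminus \{v\}$, the unique path from $u$ to $w$ in $T$ cannot pass through the degree-one vertex $v$, so $d_{T - v}(u,w) = d_T(u,w)$. Consequently $u$ and $w$ are at distance at most $3$ in $T - v$ if and only if they are at distance at most $3$ in $T$, which is exactly the assertion that $P_3(T - v)$ and $P_3(T) - v$ have the same vertex set and the same edge set.

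To reach the full conclusion I would delete the vertices of $V_s$ one at a time and iterate the identity. The point that needs care, and the only real obstacle, is checking that each not-yet-deleted vertex of $V_s$ is still a leaf of the current tree. This is true because the unique neighbour in $T$ of a leaf has degree at least two (as $|V(T)| \ge 3$), hence is not itself a leaf and so was never among the previously deleted vertices; thus every remaining vertex of $V_s$ keeps its single neighbour and stays a leaf at every stage. Applying the commutation identity repeatedly then yields $P_3(T - V_s) = P_3(T) - V_s = G - V_s$, so $T_1 = T - V_s$ is the required tree, completing the proof.
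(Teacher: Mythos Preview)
Your proof is correct and self-contained. The paper itself does not argue this lemma at all: its ``proof'' is a one-line citation of Lemma~3.2.2.3 in \cite{s3}. Your argument supplies what that citation presumably covers, namely the observation (essentially the forward direction of the paper's later Theorem~\ref{Theorem 3.1}) that deleting a leaf commutes with taking the cube, together with the bookkeeping that the remaining vertices of $V_s$ stay leaves throughout the iterated deletion. So you are not taking a genuinely different route so much as filling in what the paper outsources; your version has the advantage of being readable without access to \cite{s3}.
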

\begin{proof}
 This lemma follows directly from Lemma 3.2.2.3 in \cite{s3}.
\end{proof}

\begin{definition} [$k$th order terminal cliques]
 If $G(\cong P_3(T))$ is a graph and $V_T$ the set of its terminal vertices then the terminal cliques of $G'(\cong G-V_T)$ are defined to be terminal cliques of $1$st order. The $k$th order terminal cliques of $G$ can be obtained by extending this to $k$-times terminal vertices deleted graph of $G$ . 
\end{definition}

\begin{definition} [Tree of Cliques] 
\label{TOC}
 A tree of cliques is that subgraph $T'$ of $T$ of which every edge forms a clique in $G(\cong P_3(T))$. 
\end{definition}

\begin{lemma}
\label{Lemma 2.5}
 The tree of cliques $T'$ of any graph $G(\cong P_3(T))$ is the end deleted tree of T. 
\end{lemma}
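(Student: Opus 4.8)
The plan is to show that the edge set of $T'$ coincides with that of $\xi$, and then to note that this already pins down $T'=\xi$. First I would unwind the definitions. By Definition \ref{TOC}, $T'$ is the subgraph of $T$ spanned by all \emph{clique edges}, i.e.\ all edges $e$ of $T$ for which the $1$-span $S(e,1)$ (computed in $T$) is a clique of $G$; I use throughout, as the paper does, that ``clique'' means maximal clique, which is the content of the phrase ``any graph $G$ has cliques only of this type'' in Lemma \ref{Lemma 2.1}. On the other hand $\xi$, the end deleted tree of $T$, is obtained by deleting the leaf set $L_0$, so $V(\xi)=V(T)\setminus L_0=\bigcup_{i\ge 1}L_i$ (each vertex of $T$ lies in exactly one $L_i$), and $E(\xi)$ is exactly the set of edges of $T$ neither of whose endpoints is a leaf of $T$.

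Next I would invoke Lemma \ref{Lemma 2.1}: an edge $e=\{v_1,v_2\}$ of $T$ is a clique edge if and only if both $v_1$ and $v_2$ are higher-order leaf nodes, i.e.\ each lies in some $L_{i+1}$ with $i\ge 0$; equivalently, if and only if $v_1\notin L_0$ and $v_2\notin L_0$. Hence the clique edges of $G$ are precisely the edges of $\xi$, and $E(T')=E(\xi)$. (If one prefers to argue directly instead of quoting Lemma \ref{Lemma 2.1}: when $v_1,v_2$ are non-leaves, each has a neighbour off $e$, and a short distance computation in $T$ shows that no vertex outside $S(e,1)$ is within distance $3$ of all of $S(e,1)$, so $S(e,1)$ is a maximal clique; conversely, if say $v_1\in L_0$ then $S(e,1)$ lies in a strictly larger clique unless $T$ is a star.)

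It then remains to see that the subgraph $T'$ spanned by those edges is all of $\xi$, not a proper part of it. For this I would observe that every non-leaf vertex of $T$ is incident with some edge of $\xi$: if a non-leaf $v$ had all its neighbours in $L_0$, then $T$ would be a star and $G=P_3(T)$ would be complete, a case we exclude here (and which is handled separately, as announced in the introduction). Hence $\xi$ has no isolated vertices, so the vertex set of $T'$ equals $V(\xi)$, giving $T'=\xi$; connectedness of $T'=\xi$ is automatic, since deleting the leaves of a tree leaves a subtree.

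All the calculations involved are short. I expect the only delicate points to be the precise reading of Lemma \ref{Lemma 2.1} --- identifying ``clique edge'' with ``edge having both endpoints outside $L_0$'' --- and the bookkeeping for degenerate trees of diameter at most $3$, whose cubes are complete graphs; once these are settled, the identification $T'=\xi$ is immediate.
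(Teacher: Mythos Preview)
Your argument is correct. The paper, however, does not prove Lemma~\ref{Lemma 2.5} at all: its ``proof'' is a bare citation of Lemma~3.3.2.1 in \cite{s3}. So your route is genuinely different in that it is self-contained within the present paper, using only Definition~\ref{TOC}, the end-deleted tree, and Lemma~\ref{Lemma 2.1}. What your approach buys is exactly that self-containment: a reader without access to \cite{s3} can follow your identification $E(T')=E(\xi)$ directly from the characterization of clique edges, and your observation that a non-leaf with only leaf neighbours forces $T$ to be a star cleanly disposes of isolated vertices in $\xi$. The paper's approach buys only brevity. One small caution: your reading of Lemma~\ref{Lemma 2.1} as ``$v_1\notin L_0$ and $v_2\notin L_0$'' (rather than the literal ``$v_1,v_2\in L_{i+1}$ for a common $i$'') is the intended one and is what \cite{s3} actually proves, so you are on safe ground there; and your parenthetical direct argument for the converse direction already makes you independent of that reading anyway. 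The degenerate diameter~$\le 3$ cases you flag are precisely the complete-graph exceptions the paper carves out elsewhere, so excluding them here is consistent with how the paper uses the lemma.
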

\begin{proof}
 This lemma follows directly from Lemma 3.3.2.1 in \cite{s3}.
\end{proof}

\begin{figure*} 
	  \centering 
           \includegraphics[width=10cm]{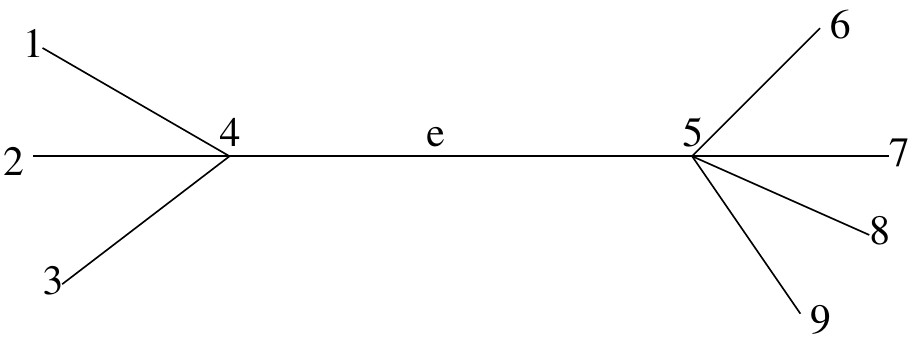}
	   \caption{\small{Tree $T$}}
	   \label{tree_cube}
\end{figure*} 

\begin{figure*}
	  \centering 
           \includegraphics[width=5cm]{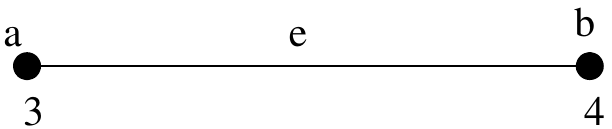}
	   \caption{\small{Weighted Tree $\tau$ equivalent to Tree $T$(Fig.~\ref{tree_cube}). }} 
	   \label{weighted_tree}
\end{figure*}

\section{Characterization of Third Power of a Tree}
\paragraph* {} Consider the formation of cliques in $P_3(T)$. Any clique can be seen centered around an edge with branches emanating from both end points. 

By Definition~\ref{WT}, there is a unique 1 to 1 mapping between any tree and its weighted tree representation. That is,  any tree $T$ is equivalent to its weighted end-deleted tree $\xi$. For example, consider the following two scenarios: (i) In the Fig. \ref{tree_cube} shown, the vertices $1$,$2$,...,$9$ form a single clique in $P_3(T)$. (ii) An equivalent representation of this tree is a weighted tree as shown in Fig. \ref{weighted_tree}. Vertices labelled $4$ and $5$ in Fig.~\ref{tree_cube} correspond to vertices a and b in Fig.~\ref{weighted_tree} with weights 3 and 5 respectively. The weights in the tree can be visualized as count of branches emanating from the corresponding vertices. Thus both (i) and (ii) have the same third power.

\begin{lemma}
\label {Lemma 3.1}
 For any Tree $T$, the end-deleted tree $\xi$ is isomorphic to the tree of cliques consisting of the edges forming cliques in $G(\cong P_3(T))$. 
\end{lemma}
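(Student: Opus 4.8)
The plan is to derive this result as an essentially immediate corollary of Lemma~\ref{Lemma 2.5}, which already asserts that the tree of cliques $T'$ of any graph $G(\cong P_3(T))$ is the end-deleted tree of $T$. First I would recall the two relevant definitions side by side: by Definition~\ref{TOC}, the tree of cliques is the subgraph $T'$ of $T$ every edge of which forms a clique in $G$, and by Definition~\ref{WT}, the end-deleted tree $\xi$ of $T$ is obtained by deleting all leaf nodes of $T$. So the statement of Lemma~\ref{Lemma 3.1} --- that $\xi$ is isomorphic to the tree of cliques consisting of the edges forming cliques in $G$ --- is just a restatement of Lemma~\ref{Lemma 2.5}, modulo identifying $T'$ with $\xi$ as abstract trees rather than as labelled subgraphs of $T$.

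The first substantive step is to verify that the edge set defining $T'$ is well understood via Lemma~\ref{Lemma 2.1}: an edge $e = \{v_1, v_2\}$ has its $1$-span equal to a clique in $G$ precisely when $v_1, v_2 \in L_{i+1}$ for some $i \ge 0$, i.e.\ precisely when $e$ is an edge of $\xi$ (both endpoints are non-leaves of $T$). Conversely, by Lemma~\ref{Lemma 2.1}, $G$ has no cliques of any other type, so no edge with a leaf endpoint can correspond to a clique. This pins down $E(T') = E(\xi)$ exactly, and since both $T'$ and $\xi$ are connected subtrees of $T$ on the same vertex set (the non-leaf vertices of $T$), they coincide as subgraphs of $T$, hence are isomorphic.

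The one point requiring a little care --- and the only place I would expect any friction --- is the degenerate situation where $\xi$ has at most one vertex (for instance when $T$ is a star or a path on few vertices), so that $\xi$ has no edges at all; here one must check that the ``tree of cliques'' is correspondingly trivial, which again follows from Lemma~\ref{Lemma 2.1} since there are then no clique edges, and from Lemma~\ref{Lemma 2.5} applied in that boundary case. Assuming Lemma~\ref{Lemma 2.5} is stated to cover this, no genuine obstacle remains. I would therefore write the proof in two or three lines: cite Definition~\ref{TOC} and Definition~\ref{WT} to identify the two objects, invoke Lemma~\ref{Lemma 2.1} to confirm that the clique edges of $G$ are exactly the edges of $\xi$, and conclude by Lemma~\ref{Lemma 2.5}.
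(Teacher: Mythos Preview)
Your proposal is correct and follows essentially the same approach as the paper: the paper's proof is a one-line citation of Definition~\ref{TOC} and Lemma~\ref{Lemma 2.5}, which is exactly the core of your argument. Your additional invocation of Lemma~\ref{Lemma 2.1} and the discussion of degenerate cases are reasonable elaborations but not present in (nor needed by) the paper's proof.
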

\begin{proof}
This lemma follows directly from Definition~\ref{TOC} and Lemma~\ref{Lemma 2.5}.
\end{proof}


\begin{theorem}
\label{Theorem 3.1}
Let $T$ be a tree. Then $v_i$ is an end point of $T$ if and only if $P_3(T_{v_i})$ is isomorphic to ${(P_3(T))}_{v_i}$, where $G_{v_i}$ represents the vertex deleted graph obtained after removing $v_i$ from the graph $G$ .
\end{theorem}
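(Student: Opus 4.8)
The plan is to prove the two directions separately, using the structural facts about powers of trees established in Section~2. For the forward direction, suppose $v_i$ is an end point (leaf) of $T$. The key observation is that deleting a leaf $v_i$ from $T$ yields a tree $T_{v_i}$, and the claim reduces to showing that forming the cube and then deleting the corresponding vertex commutes with deleting the vertex and then forming the cube; that is, $(P_3(T))_{v_i} \cong P_3(T_{v_i})$. Since distances in $T_{v_i}$ between two surviving vertices $u,w$ equal their distances in $T$ (no shortest path in a tree between two vertices both distinct from a leaf can pass through that leaf), the adjacency relation "$d(u,w)\le 3$" is preserved in both directions among the surviving vertices. Hence the two graphs have the same vertex set and the same edge set, giving the isomorphism. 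This direction is essentially immediate once one notes that a leaf is never an internal vertex of a geodesic.

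For the converse, I would argue contrapositively: suppose $v_i$ is \emph{not} an end point of $T$, and show $P_3(T_{v_i}) \not\cong (P_3(T))_{v_i}$. Here $T_{v_i}$ need not even be connected (deleting an internal vertex of a tree disconnects it), and more to the point the distances change: there exist vertices $u,w \neq v_i$ with $d_T(u,w)\le 3$ but whose $T$-geodesic passes through $v_i$, so that either $d_{T_{v_i}}(u,w) > 3$ or $u,w$ lie in different components. In the first case $uw$ is an edge of $(P_3(T))_{v_i}$ but not of $P_3(T_{v_i})$; in the second case $P_3(T_{v_i})$ is disconnected while $(P_3(T))_{v_i}$—being a cube of a tree minus one vertex—is connected by Lemma~\ref{Lemma 2.4} (or directly, since a leaf-or-internal deletion from a connected $P_3(T)$ with enough vertices stays connected when $v_i$ is internal, as its neighbours remain mutually close). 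Either way the two graphs differ, and in fact they differ in edge count, which already blocks isomorphism.

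The main obstacle I anticipate is the converse direction, and specifically making the edge-count (or connectivity) discrepancy airtight in small or degenerate cases: when $T$ is a path on few vertices, when $v_i$ is an internal vertex both of whose sides are short, or when $n=\mathrm{diam}(T)$ is so small that $P_3(T)$ is already complete. One must verify that even then, deleting an internal vertex genuinely destroys some adjacency present in $(P_3(T))_{v_i}$, or else handle the complete-graph exception explicitly (as the paper must anyway, cf. the remarks on cube roots of complete graphs in Section~1). I would organize the argument by first disposing of the case where $P_3(T)$ is complete, then for the generic case exhibiting an explicit pair $u,w$ at $T$-distance exactly $2$ or $3$ through $v_i$; the counting of edges lost versus the vertices removed should then separate $P_3(T_{v_i})$ from $(P_3(T))_{v_i}$ cleanly. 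A convenient bookkeeping tool is that $\bigl|E\bigl((P_3(T))_{v_i}\bigr)\bigr| = |E(P_3(T))| - \deg_{P_3(T)}(v_i)$, whereas $|E(P_3(T_{v_i}))|$ is computed from scratch in $T_{v_i}$; comparing these via the known degree formula for vertices in a cube of a tree finishes the proof.
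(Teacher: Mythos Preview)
Your forward direction is essentially identical to the paper's: since a leaf lies on no geodesic between two other vertices, distances among surviving vertices are unchanged, so $P_3(T_{v_i})$ and $(P_3(T))_{v_i}$ coincide as labelled graphs.

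For the converse, your argument is correct but needlessly elaborate, and the ``main obstacle'' you anticipate does not exist. You split into two subcases: either $u,w$ land in different components of $T_{v_i}$, or they stay in the same component but their distance exceeds $3$. But in a tree the unique $u$--$w$ path \emph{is} the geodesic, so if that path passes through $v_i$ then deleting $v_i$ separates $u$ from $w$; the first subcase always occurs and the second is vacuous. Consequently, whenever $v_i$ is internal, $T_{v_i}$ is a disconnected forest and $P_3(T_{v_i})$ is disconnected. The paper's proof of the converse is exactly this one line: $(P_3(T))_{v_i}$ is connected for every $v_i$, while $P_3(T_{v_i})$ is connected only when $v_i$ is an end point. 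No edge-counting, no degree formulas, no separate treatment of the complete-graph case is needed---even when $P_3(T)\cong K_p$, an internal $v_i$ still disconnects $T_{v_i}$, so $P_3(T_{v_i})$ is disconnected while $(K_p)_{v_i}=K_{p-1}$ is connected.

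In short: keep your forward direction, and for the converse simply observe that every internal vertex of a tree is a cut vertex. That is the whole proof.
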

\begin{proof}
 Let $v_i$ be a end point of $T$. Any edge $v_lv_m$ in $P_3(T_{v_i})$ is also present in $(P_3(T))_{v_i}$, where subscripts are used to indicate vertex deleted graphs. Now since there does not exist any path $v_lv_iv_m$ in $T$, there are no two points $v_l$ and $v_m$ which are adjacent in $P_3(T_{v_i})$  but not adjacent in $(P_3(T))_{v_i}$.

 For the second part, let $T$ be a tree and $v_i$ be a point of it such that $P_3(T_{v_i})$ is isomorphic to ${(P_3(T))}_{v_i}$. ${(P_3(T))}_{v_i}$ is connected for All $v_i$ but $P_3(T_{v_i})$ is connected only if $v_i$ is an end point.
\end{proof}

\begin{corollary}
\label{Corollary 3.1}
Let $T$ be a tree. Then $v_i$ is an end point of $T$ if and only if $P_3(T_{v_i})$ is isomorphic to ${P_3(T)}_{v_i}$.
\end{corollary}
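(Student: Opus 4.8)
\section*{Proof proposal for Corollary~\ref{Corollary 3.1}}

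The plan is to obtain the corollary directly from Theorem~\ref{Theorem 3.1}. The first step is to note that the two notations are the same object: $(P_3(T))_{v_i}$ as written in the theorem and $P_3(T)_{v_i}$ as written in the corollary both denote the graph obtained from the cube $P_3(T)$ by deleting the vertex corresponding to $v_i$ together with its incident edges. Under this identification the corollary is literally the statement of Theorem~\ref{Theorem 3.1}, and nothing more is needed. If one prefers to read $P_3(T)_{v_i}$ as an unlabelled card of $P_3(T)$ (a member of its deck), the content is still the same, but then I would spell the two directions out as below, since this is the form that will be used later for recognizability and weak reconstruction.

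For the forward implication I would argue exactly as in the proof of Theorem~\ref{Theorem 3.1}: if $v_i$ is an end point of $T$, then $v_i$ never occurs as an interior vertex of a path in $T$, so deleting it destroys no ``distance at most $3$'' relation among the remaining vertices; consequently the natural bijection $V(T)\setminus\{v_i\}\to V(P_3(T))\setminus\{v_i\}$ is an isomorphism $P_3(T_{v_i})\cong P_3(T)_{v_i}$. For the converse I would invoke the connectivity argument of Theorem~\ref{Theorem 3.1}: $P_3(T)_{v_i}$ is connected for every choice of $v_i$, whereas if $v_i$ is an internal vertex then $T_{v_i}$ is disconnected and hence so is $P_3(T_{v_i})$; thus the assumed isomorphism forces $v_i$ to be an end point.

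The only point that requires a moment's care, and the closest thing to an obstacle here, is the claim that $P_3(T)$ remains connected after deleting an arbitrary single vertex; this is standard, since already the square of a connected graph on at least two vertices has this property and the cube is a supergraph of the square on the same vertex set, but it is worth stating explicitly because the backward direction rests on it. Beyond that I expect no difficulty: the corollary is essentially a bookkeeping reformulation of Theorem~\ref{Theorem 3.1} that phrases the characterization of end points of $T$ purely in terms of the graph $P_3(T)$ and its vertex-deleted subgraphs.
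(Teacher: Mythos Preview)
Your proposal is correct and matches the paper's approach: the paper's proof is simply ``It follows directly from Theorem~\ref{Theorem 3.1},'' and you have identified precisely this---that the corollary is notationally identical to the theorem once one observes $(P_3(T))_{v_i}$ and $P_3(T)_{v_i}$ denote the same object. Your additional elaboration of the two directions and the remark about $2$-connectedness of $P_3(T)$ are sound but go beyond what the paper records.
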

\begin{proof}
It follows directly from Theorem~\ref{Theorem 3.1}.
\end{proof}

This theorem establishes a one-to-one mapping between Trees and their Third Powers. It follows the approach suggested by Yerra et al. \cite{a1}.

\begin{theorem}
\label{Theorem 3.2}
A tree $T$ can be uniquely determined from its third power,$P_3(T)(\not\cong$ $K_p$).
\end{theorem}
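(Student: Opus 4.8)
The plan is to reconstruct the tree $T$ from $G \cong P_3(T)$ by working inward from the leaves, using the structural lemmas of Section~2 together with Theorem~\ref{Theorem 3.1}. First I would recover the skeleton of $T$: by Lemma~\ref{Lemma 2.5} (equivalently Lemma~\ref{Lemma 3.1}), the tree of cliques $T'$ of $G$ is isomorphic to the end-deleted tree $\xi$ of $T$, and $T'$ is identifiable purely from $G$ since its edges are exactly the clique edges of $G$ (characterized via Lemma~\ref{Lemma 2.1}). So the combinatorial shape of $\xi$ is determined. What remains is to recover, for each vertex of $\xi$, the number of leaves of $T$ attached to it --- equivalently, the weights in the weighted-tree representation of Definition~\ref{WT} --- and then to hang the correct number of leaves back on to pass from $\xi$ to $T$.

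The key step is to show these weights are forced by $G$. For a vertex $u$ of $\xi$ corresponding to an edge $e$ that is a clique edge of $G$, the clique $\kappa$ centered about $e$ (Lemma~\ref{Lemma 2.1}) has a size determined by the degrees of the two endpoints of $e$ in $T$; since the contribution of the $\xi$-part to these degrees is already known from the shape of $\xi$, the leftover accounts for the leaves attached at the two endpoints. Doing this consistently over all clique edges, and using the terminal-clique analysis (Lemma~\ref{Lemma 2.2}) to pin down which leaves attach where near the ends of $\xi$, recovers all the weights, hence recovers $T$ up to isomorphism. The excluded case $G \cong K_p$ is exactly the degenerate situation in which $\xi$ has a single vertex (a star, or diameter $\le 2$ configuration) and the clique structure no longer distinguishes the tree root, which is why that case is set aside.

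Alternatively --- and this is closer to the remark that the result ``follows the approach suggested by Yerra et al.'' --- one can argue by induction on $|V(T)|$ using Theorem~\ref{Theorem 3.1} directly: the terminal vertices of $G$ are recognizable from $G$ (Lemma~\ref{Lemma 2.3} identifies them as leaves of $T$), and deleting a set of them from $G$ again yields the cube of a tree (Lemma~\ref{Lemma 2.4}). Peeling off the leaves reduces $T$ to a smaller tree whose cube is a known graph; by induction that smaller tree is reconstructible, and Theorem~\ref{Theorem 3.1} tells us precisely how leaves of $T$ sit inside $G$ relative to $(P_3(T))_{v_i} \cong P_3(T_{v_i})$, so the peeled leaves can be reattached uniquely. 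The base case is where $\xi$ is trivial, i.e. $G$ is complete, which is exactly the excluded case.

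The main obstacle I expect is the reattachment / weight-recovery step: showing that the multiset of leaves and their points of attachment to $\xi$ is genuinely determined by $G$, and not merely the shape of $\xi$. Concretely one must verify that two non-isomorphic trees with the same end-deleted tree cannot have isomorphic cubes (outside the complete-graph case) --- this requires carefully reading off, from clique sizes and from the adjacency pattern between clique edges at distance $1$ and $2$ in $G$, how many leaves sit at each vertex of $\xi$ and, for vertices at the fringe of $\xi$, resolving the ambiguity between ``a leaf of $T$'' and ``a leaf of $\xi$ carrying further leaves.'' The Section~2 lemmas (especially Lemmas~\ref{Lemma 2.1}, \ref{Lemma 2.2}, \ref{Lemma 2.3}) are exactly the tools that make this bookkeeping work, so the argument is a matter of assembling them correctly rather than proving anything essentially new.
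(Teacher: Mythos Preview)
Your second (inductive) approach is the one the paper actually takes, but you are missing its key ingredient. The paper does not peel off terminal vertices and then try to reattach them. Instead it forms the full vertex-deleted deck $\mathbb{S}$ of $G=P_3(T)$, uses the characterization of tree powers from \cite{s3} to pick out the sub-multiset $\mathbb{M}\subseteq\mathbb{S}$ of cards that are themselves cubes of trees, observes via Theorem~\ref{Theorem 3.1}/Corollary~\ref{Corollary 3.1} that $\mathbb{M}$ is exactly $\{(P_3(T))_{v_i}:v_i\text{ a leaf of }T\}=\{P_3(T_{v_i}):v_i\text{ a leaf}\}$, applies the inductive hypothesis to recover each $T_{v_i}$, and then invokes the Harary--Palmer theorem \cite{h4} that a tree is uniquely reconstructible from its multiset of endpoint-deleted subtrees. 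That last citation is the whole point: it converts ``I know all the $T_{v_i}$'' directly into ``I know $T$,'' with no reattachment bookkeeping required. Your sketch, by contrast, deletes terminal vertices (which by Lemma~\ref{Lemma 2.3} are only a \emph{subset} of the leaves of $T$), lands on some smaller cube, and then has to argue that the peeled leaves can be put back in the right places --- and you correctly flag this reattachment step as the main obstacle without resolving it. The paper simply avoids it.

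Your first approach (recover $\xi$ as the tree of cliques, then read off the leaf-weights from clique sizes) is a genuinely different route that the paper does not pursue. Note that clique sizes alone give only the sums $\deg_T(u)+\deg_T(v)$ over clique edges $\{u,v\}$, which for a path-shaped $\xi$ yields one fewer independent equation than there are unknown weights; you would need the finer degree/adjacency data you allude to, and that analysis is not carried out here. So as written, neither of your sketches closes the argument, and the clean fix is exactly the Harary--Palmer step the paper uses.
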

\begin{proof}
If $P_3(T)$ is the complete graph $K_p$, then $T$ could be any tree of diameter less than $4$. We shall prove this theorem by induction on $|T|$ for $P_3(T)\not\cong K_p$ where $|T|$ is the number of vertices in $T$. 

The hypothesis of the theorem is true for $|T|=1$ and $|T|=2$ trivially. Assume it to be true for $|T|\le r$. Let $|T|=r+1$. Consider the set $\mathbb{S}$ of point deleted subgraphs of $P_3(T)$. Gupta et al. \cite{s3}, in their discussion of characterization of power of trees show that it is possible to select a subset $\mathbb{M}$ from $\mathbb{S}$ consisting of those subgraphs which are cubes of some trees. $T_{v_i}$ is a tree only when $v_i$ is and end point and in this case, from Corollary~\ref{Corollary 3.1} we have,
\begin{center}
$P_3(T_{v_i})\cong{(P_3(T))}_{v_i}$
\end{center}
 So $\mathbb{M}$ is precisely the set ${(P_3(T))}_{v_i}$ where $v_i$ is an end point of $T$ from Theorem~\ref{Theorem 3.1}. By assumption of this theorem, $T_{v_i}$ can be uniquely determined as $|T_{v_i}|=r$. The result of this theorem now follows by induction as a tree is uniquely reconstructible from end point deleted subgraphs (\cite{h4}). 
\end{proof}



\section{Recognition and Weak Reconstruction}


Harary et al. \cite{h4} have shown that trees are reconstructible from their end vertex deleted subgraphs. In our approach, we shall use this reconstruction approach as a black box $\bar{B}$. Given the set of end vertex deleted subgraphs, $\bar{B}$ will uniquely return the tree.  In case the inputted deck does not belong to a tree, the blackbox outputs an error. Let $\mathbb{C}$ denote the class consisting of all graphs isomorphic to third power of some tree. 

\begin{lemma}
\label{Lemma 4.1}
$\mathbb{C}$ is weakly reconstructible.
\end{lemma}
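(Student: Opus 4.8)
The plan is to recover, from the common deck of two graphs $G,H\in\mathbb{C}$, enough information to pin down the two underlying trees, and then invoke $\bar{B}$. Write $G\cong P_3(T)$ and $H\cong P_3(S)$. First dispose of the degenerate case: if $G\cong K_p$ then its deck is $p$ copies of $K_{p-1}$, and any graph with that deck is $K_p$ (a non-edge $\{u,w\}$ of $H$ would still be a non-edge in the card $H_x$ obtained by deleting any third vertex $x$), so $H\cong K_p\cong G$. By the same token $G\not\cong K_p$ forces $H\not\cong K_p$; so from now on assume $\mathrm{diam}(T)\ge 4$ and $\mathrm{diam}(S)\ge 4$ (recall $P_3$ of a tree is complete iff the tree has diameter at most $3$).

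Next I would isolate from $\mathrm{deck}(G)$ the sub-multiset $\mathbb{M}$ of those cards that are isomorphic to the cube of some tree, a condition one can test directly on each card. By Theorem~\ref{Theorem 3.1} and Corollary~\ref{Corollary 3.1}, together with the characterization of cubes of trees from \cite{s3} used in the proof of Theorem~\ref{Theorem 3.2}, a card $(P_3(T))_{v_i}$ belongs to $\mathbb{C}$ precisely when $v_i$ is an end vertex of $T$, and then $(P_3(T))_{v_i}\cong P_3(T_{v_i})$. Hence $\mathbb{M}=\{\,P_3(T_{v_i}) : v_i\in L_0(T)\,\}$ as a multiset; computing the same sub-multiset from $\mathrm{deck}(H)$ yields $\{\,P_3(S_{w_j}) : w_j\in L_0(S)\,\}$, and these two multisets coincide because the decks coincide.

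Now I would invert via Theorem~\ref{Theorem 3.2}: each member of $\mathbb{M}$ that is not a complete graph has a unique tree cube root, so the equality of multisets above upgrades to an equality of multisets of trees $\{\,T_{v_i} : v_i\in L_0(T)\,\}=\{\,S_{w_j} : w_j\in L_0(S)\,\}$; in other words, $T$ and $S$ have exactly the same collection of end-vertex-deleted subtrees. Feeding this common collection to the black box $\bar{B}$ of Harary et al.~\cite{h4} returns a unique tree, so $T\cong S$, whence $G\cong P_3(T)\cong P_3(S)\cong H$. This establishes weak reconstruction.

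The one loose end, and the step I expect to be the main obstacle, is when some member of $\mathbb{M}$ is a complete graph $K_q$, since then Theorem~\ref{Theorem 3.2} does not recover the corresponding $T_{v_i}$. But $P_3(T_{v_i})\cong K_q$ forces $\mathrm{diam}(T_{v_i})\le 3$, and deleting a leaf changes the diameter by at most $1$, so this forces $\mathrm{diam}(T)\le 4$; combined with $\mathrm{diam}(T)\ge 4$ it confines $T$ to the explicit family of diameter-$4$ trees in which all but one or two end vertices still yield non-complete cards. For that finite list of configurations I would argue directly, using the non-complete members of $\mathbb{M}$ (which already determine all but those one or two $T_{v_i}$) to show $G$ is still determined by its deck. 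The rest of the argument is just assembly from Theorems~\ref{Theorem 3.1}--\ref{Theorem 3.2} and $\bar{B}$.
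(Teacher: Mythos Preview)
Your approach is essentially the paper's: isolate from the deck the sub-multiset $\mathbb{M}$ of cards lying in $\mathbb{C}$, identify these via Theorem~\ref{Theorem 3.1} as exactly the end-vertex-deleted cards, recover the corresponding leaf-deleted subtrees (the paper leaves this implicit; you make the appeal to Theorem~\ref{Theorem 3.2} explicit), and feed them to $\bar{B}$. You are in fact more careful than the paper, which neither separates off the $K_p$ case nor addresses the loose end you flag about complete members of $\mathbb{M}$.

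Two small corrections to your sketch of that loose end: the diameter-$4$ trees do not form a finite list (the weights on the end-deleted tree can be arbitrary), and for $T=P_5$ \emph{every} leaf-deletion yields $P_4$, whose cube is $K_4$, so ``all but one or two non-complete cards'' can fail---the direct argument there should be organized by the structure of the weighted end-deleted tree $\xi$ rather than by enumeration.
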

\begin{proof}
We are given a set $\mathbb{S}$ of subgraphs $G_1$,$G_2$,...,$G_n$, known to be the deck of $G\in \mathbb{C}$. We have to reconstruct $G$ uniquely. Using the characterization of tree powers as discussed in Gupta et al. \cite{s3}, it is possible to select a subset $\mathbb{M}$ from $\mathbb{S}$ consisting of those subgraphs which are cubes of some tree. $T_{v_i}$ is a tree only when $v_i$ is and end point and in this case, from Corollary~\ref{Corollary 3.1} we have, $P_3(T_{v_i})\cong{(P_3(T))}_{v_i}$. So $\mathbb{M}$ is precisely the set ${(P_3(T))}_{v_i}$ where $v_i$ is an end point of $T$ from Theorem~\ref{Theorem 3.1}.

Using the set $\mathbb{M}$ and the black box $\bar{B}$ and given the fact that original deck corresponds to some member of $\mathbb{C}$, we can reconstruct $T$ and then $G \equiv P_3(T)$ uniquely.  Due to the unique reconstruction, we can conclude that no two non-isomorphic members of the $\mathbb{C}$ have the same deck, hence $\mathbb{C}$ is weakly reconstructible.
\end{proof}

\begin{lemma}
\label{Lemma 4.2}
$\mathbb{C}$ is recognizable.
\end{lemma}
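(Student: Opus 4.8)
The plan is to reduce recognition to verifying, on the deck alone, the intrinsic characterization of cubes of trees from \cite{s3} together with Section~3, after first peeling off the exceptional cases. From the deck $\mathbb{S}$ one reads $n=|\mathbb{S}|$, and the number of edges of $G$, its degree sequence, and whether $G$ is connected are all reconstructible (Kelly's lemma and standard arguments). Since a power of a tree is connected, a disconnected $G$ is not in $\mathbb{C}$. Next, $G\cong K_n$ iff the edge count is $\binom{n}{2}$ (equivalently, iff every card is $\cong K_{n-1}$), and since $K_n\cong P_3(K_{1,n-1})$ we have $K_n\in\mathbb{C}$; so if the deck is that of $K_n$ the answer is ``yes''. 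From now on assume $G$ is connected and $G\not\cong K_n$.

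For such $G$, membership in $\mathbb{C}$ is a statement about the configuration of maximal cliques of $G$: by Lemma~\ref{Lemma 2.1} every maximal clique must be the $1$-span of a clique edge; by Lemmas~\ref{Lemma 2.5} and~\ref{Lemma 3.1} the clique edges must form a tree of cliques isomorphic to the end-deleted tree of some tree; terminal cliques must behave as in Lemma~\ref{Lemma 2.2}; and the branch weights these cliques induce on the tree of cliques must be realizable by an actual tree. I would verify that each of these conditions is determined by $\mathbb{S}$. The point is that, since $G\not\cong K_n$, every clique of $G$ has at most $n-1$ vertices, so by Kelly's lemma the number of induced subgraphs of $G$ isomorphic to any fixed clique — and, with more care, to any fixed small configuration of overlapping cliques on fewer than $n$ vertices — is reconstructible; reading off clique sizes, multiplicities, and overlaps from these counts suffices to test, and hence confirm or refute, the characterization. (As an aside, for each card $G_i$ one may also test directly whether $G_i\in\mathbb{C}$ by applying the same characterization to $G_i$, which recovers the subdeck $\mathbb{M}$ of Lemma~\ref{Lemma 4.1}; but it is the counting argument above that actually pins $G$ down.)

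The main obstacle is exactly where Kelly's lemma runs out: a clique configuration whose vertex set is all of $V(G)$, so that its multiplicity is not supplied by subgraph counting. This occurs only for an explicitly describable family --- essentially the graphs $G$ admitting two maximal cliques whose union is $V(G)$, which are the cubes of trees of diameter four whose end-deleted tree is a path on three vertices (for instance $P_3$ of the path on five vertices, namely $K_5$ minus an edge), all of the form ``a complete graph joined to a disjoint union of two cliques''. For these I would argue directly, using that the number of universal vertices is read off from the reconstructible degree sequence and that, within this structurally rigid family, the deck already singles $G$ out and determines whether it lies in $\mathbb{C}$. Once this family is disposed of, the clique-configuration test applies to every remaining graph, so membership in $\mathbb{C}$ is deck-determined; together with Lemma~\ref{Lemma 4.1}, this completes the reconstruction of the class $\mathbb{C}$.
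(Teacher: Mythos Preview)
Your approach is genuinely different from the paper's. The paper does not try to verify the clique-structure characterization of cubes of trees directly from the deck via Kelly's counting lemma. Instead it runs a \emph{reconstruct-then-check} procedure: form the subdeck $\mathbb{M}\subseteq\mathbb{S}$ of cards that are themselves cubes of trees (testing each card with the characterization of \cite{s3}), take the unique tree cube root of each, feed these trees to the black box $\bar{B}$ that reconstructs a tree from its endpoint-deleted subtrees, and, if $\bar{B}$ returns some $T_x$, compute $G_x=P_3(T_x)$ and apply the Deck Checking Algorithm of \cite{k2} to decide whether $\mathbb{S}$ is the deck of $G_x$. The boolean answer is ``yes'' exactly when this final check passes. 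This avoids all the subgraph counting you propose; the cost is reliance on $\bar{B}$ and on the deck-checking routine as black boxes.

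Your route is more intrinsic, but as written it has two real gaps. First, Kelly's lemma gives you only the \emph{number} of induced copies of each fixed proper subgraph; to test the characterization you need the incidence pattern of the maximal cliques --- which pairs are adjacent along a shared clique edge, and with what overlap --- in order to assemble the tree of cliques and check the weight consistency. You assert that ``reading off clique sizes, multiplicities, and overlaps from these counts suffices'', but that passage from counts to structure is precisely the nontrivial step and is not supplied. Second, your boundary analysis is too narrow. You single out the case of two maximal cliques covering $V(G)$, but the configurations you must count to recover overlaps can involve three or more cliques whose union is all of $V(G)$ --- for instance whenever the putative tree of cliques is a star or a short path with small attached weights --- and those escape Kelly's lemma just as surely. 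So the exceptional family is larger than you claim, and the ad hoc argument you sketch for it (via universal vertices and degree sequence) would need to be extended considerably. The paper's reconstruct-then-check device sidesteps both issues in one stroke.
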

\begin{proof}
We are given a set $\mathbb{S}$ of subgraphs $G_1$,$G_2$,...,$G_n$. In order for $\mathbb{C}$ to be recognizable, we have to give a boolean answer to the question: Does the deck $\mathbb{S}$ corresponds to a graph in $\mathbb{C}$? We consider both the cases in the paragraphs below.
 
 If $\mathbb{S}$ indeed corresponds to a graph in $\mathbb{C}$, we are guaranteed to obtain the unique reconstruction $G \in \mathbb{C}$ by using a strategy employed in the proof of Lemma~\ref{Lemma 4.1}. It can be verified using Deck Checking Algorithm \cite{k2}, asserting whether $\mathbb{S}$ resulted from the reconstructed $G$, and ``true'' is returned as the boolean answer. 
 
 Now consider the other case, where deck $\mathbb{S}$ didn't correspond to a graph in $\mathbb{C}$. On using the black box $\bar{B}$ over $\mathbb{M}$, we have two subcases: it will either give an error, or return a tree $T_x$. In case of error, we return ``false'' directly. In the other subcase, we obtain $G_x = P_3(T_x)$. On using the Deck Checking Algorithm \cite{k2}, $\mathbb{S}$ will not match the reconstructed $G_x$ since $\mathbb{S}$ didn't correspond to a graph in $\mathbb{C}$, so again we can return ``false''.
\end{proof}

%
%

\begin{lemma}
\label {Lemma 4.3}
RC is true for class of graphs isomorphic to cube of a tree, except for complete graphs. 
\end{lemma}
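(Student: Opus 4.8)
The plan is to invoke the \emph{reconstructive approach} described in Section~1: a class of graphs satisfies the Reconstruction Conjecture as soon as it is both \emph{recognizable} and \emph{weakly reconstructible}. Both halves are already available --- recognizability of $\mathbb{C}$ is Lemma~\ref{Lemma 4.2} and weak reconstruction of $\mathbb{C}$ is Lemma~\ref{Lemma 4.1} --- so the argument will be an assembly of these two facts, with the complete graphs set aside as the degenerate case.

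Concretely, I would argue as follows. Let $G$ be a graph on at least three vertices with $G \not\cong K_p$, and suppose we are given the deck $\mathbb{S}=\{G_1,\dots,G_n\}$ of $G$ with the promise that $G\in\mathbb{C}$. First apply the recognition procedure of Lemma~\ref{Lemma 4.2} to certify from $\mathbb{S}$ alone that the deck comes from a member of $\mathbb{C}$. Then run the weak reconstruction procedure of Lemma~\ref{Lemma 4.1}: isolate the subdeck $\mathbb{M}$ of cards that are themselves cubes of trees using the characterization of Gupta et al.~\cite{s3}, observe via Theorem~\ref{Theorem 3.1} that $\mathbb{M}$ is exactly $\{(P_3(T))_{v_i} : v_i \text{ an end vertex of } T\}$, hand $\mathbb{M}$ to the black box $\bar{B}$ of Harary et al.~\cite{h4} to recover $T$ uniquely, and output $G\cong P_3(T)$. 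Since the whole pipeline is deterministic and every card in $\mathbb{M}$ is a genuine cube of a tree (so $\bar{B}$ never reports an error), $G$ is determined up to isomorphism, which is exactly RC for this class.

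The complete graphs are excluded precisely because this is where the machinery degenerates: if $G\cong K_p$ then, as recorded in the proof of Theorem~\ref{Theorem 3.2}, $G$ is the cube of every tree of diameter at most three, so its cube root is not unique and the ``recover the tree first'' step of Lemma~\ref{Lemma 4.1} has no canonical output. One may still note that $K_p$ has deck consisting of $p$ copies of $K_{p-1}$ and is the unique graph realizing it, so RC holds for $K_p$ on independent grounds; but that observation lies outside the cube-root framework built here, hence the stated exception.

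I do not expect a genuine obstacle: the lemma is essentially a corollary of Lemmas~\ref{Lemma 4.1} and~\ref{Lemma 4.2}. The only points needing care are that the recognition and weak-reconstruction steps compose cleanly --- once recognizability has certified ``$\mathbb{S}$ is the deck of some member of $\mathbb{C}$'', the reconstruction procedure is entitled to assume its input really is such a deck --- and that the complete-graph exception is flagged wherever uniqueness of the cube root is used implicitly.
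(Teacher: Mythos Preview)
Your proposal is correct and follows exactly the paper's approach: the paper's own proof is the single line ``This follows directly from Lemma~\ref{Lemma 4.1} and Lemma~\ref{Lemma 4.2}'', and you have simply unpacked that reference by spelling out how recognizability plus weak reconstruction yield RC for the class, with the complete-graph exception flagged. No changes needed.
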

\begin{proof}
This follows directly from Lemma~\ref{Lemma 4.1} and Lemma~\ref{Lemma 4.2}
\end{proof}

\paragraph* {}
An alternate result shows that reconstruction conjecture holds trivially for complete graphs \cite{m3}. The following result follows:

\begin{theorem}
\label{Theorem 4.1} 
RC is true for $\mathbb{C}$, the class of graphs isomorphic to cube of a tree.
\end{theorem}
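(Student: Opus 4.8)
The plan is to reduce Theorem~\ref{Theorem 4.1} to the already-established Lemma~\ref{Lemma 4.3} by handling, separately, the one configuration that Lemma~\ref{Lemma 4.3} explicitly excludes: the complete graphs. These two regimes are exactly the ones that run through the whole paper. A non-complete graph in $\mathbb{C}$ has a unique tree cube root (Theorem~\ref{Theorem 3.2}), so the recognition and weak-reconstruction arguments of Section 4, built on the black box $\bar{B}$, apply verbatim. A complete graph $K_p$ is also a member of $\mathbb{C}$ --- as noted in the proof of Theorem~\ref{Theorem 3.2}, it is the cube of any tree of diameter at most $3$ --- but its tree cube root is highly non-unique, so $\bar{B}$ is of no help and $K_p$ must be reconstructed directly from its deck.

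The first step I would carry out is to record that completeness of $G$ is visible in the deck: every card $G_i$ is complete if and only if $G$ is complete. One direction is immediate; for the other, if $uv \notin E(G)$ then deleting any third vertex (such a vertex exists, as $|V(G)| \ge 3$) leaves a card still missing the pair $uv$, hence non-complete. So, given a deck $\mathbb{S}$, I branch on whether all of its cards are complete. If some card is not complete, the graph is a non-complete cube of a tree and Lemma~\ref{Lemma 4.3} --- itself obtained from Lemma~\ref{Lemma 4.1} and Lemma~\ref{Lemma 4.2} --- yields the unique reconstruction. If every card is complete, then each $G_i \cong K_{n-1}$, so $G$ has $n$ vertices and, by Kelly's edge-counting identity, exactly $\binom{n}{2}$ edges, forcing $G \cong K_n$; this is the trivial reconstructibility of complete graphs recorded in \cite{m3}. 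In both branches $G$ is determined up to isomorphism by $\mathbb{S}$, and since every member of $\mathbb{C}$ falls into one branch or the other, RC holds for all of $\mathbb{C}$.

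I do not expect a genuine obstacle here: the statement is essentially a repackaging of Lemma~\ref{Lemma 4.3} together with the elementary reconstructibility of $K_p$. The single point that needs care is the one isolated in the first step --- that the complete/non-complete dichotomy can itself be read off the deck. Without this observation, the honest failure of tree-cube-root uniqueness on complete graphs (which the uniqueness result of Section~3 explicitly excludes) would be a gap rather than a harmless special case, because the reconstruction procedure would not know whether to route the deck through $\bar{B}$ or to fall back on direct edge counting.
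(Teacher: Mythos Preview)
Your proposal is correct and follows essentially the same route as the paper: Lemma~\ref{Lemma 4.3} handles the non-complete members of $\mathbb{C}$, and the trivial reconstructibility of complete graphs \cite{m3} finishes the remaining case. You are somewhat more careful than the paper in explicitly verifying that the complete/non-complete split is itself deck-recognizable, a point the paper leaves implicit.
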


\section{Conclusion and Future Work}
\paragraph*{} Trees were proven to be reconstructible by Kelly \cite{k1} and squares of trees by Gupta \cite{s1}\cite{s2}. In this paper, we have proved the conjecture for graphs isomorphic to cube of a tree. It would be interesting to prove the conjecture for higher powers of trees. 

 As discussed in Yerra et al \cite{a1}, for any $n \ge 4 $, there exist non-isomorphic trees $T_1$ and $T_2$ such that $P_n(T_1) \cong  P_n(T_2)$. Thus the uniqueness argument no longer holds while proving the class of Graphs isomorphic to fourth(or higher) powers. Thus proving RC for such classes of graphs requires different approach. In general, we'd like to prove RC for Graphs isomorphic to any power $n$ of a tree.

\end{document}